\newtheorem{example}{Example}
\newtheorem{theorem}{Theorem}
\newtheorem{definition}{Definition}
\newtheorem{lemma}{Lemma}
\title{On the Role of Postconditions in Dynamic First-Order Epistemic Logic}
\author{%
C\^ome Neyrand$^1$\and
Sophie Pinchinat$^2$}
\begin{document}

\maketitle

\begin{abstract} Dynamic Epistemic Logic (\DEL) is a logic that models
  information change in a multi-agent setting through the use of
  action models with pre- and post-conditions. In a recent work, \DEL
  has been extended to first-order epistemic logic (\DFOEL), with a
  proof that the resulting Epistemic Planning Problem is decidable, as
  long as action models pre- and post-conditions are non-modal and the first-order domain is finite. Our contribution
  highlights the role post-conditions have in \DFOEL. We show that the
  Epistemic Planning Problem with possibly infinite first-order
  domains is undecidable if the non-modal event post-conditions may
  contain first-order quantifiers, while, on the contrary, the problem
  becomes decidable when event post-conditions are quantifier-free. The
  latter result is non-trivial and makes an extensive use of automatic
  structures.  \end{abstract}

\section{Introduction}
%======================
\label{sec:introduction}
First-order modal logic has been around since the first introduction
of modal logic. Modal logic (see \cite{blackburn}) was introduced as
an extension of propositional logic, and its semantics relies on
Kripke models \ie with different valuations in modal relations to one
another. These modal relations can be of different sorts creating a
wide range of logics that are direct applications of modal logic. For
example, temporal logic and epistemic logic are modal logics in which
modal relations are defined as relations of time and knowledge
respectively. First-order modal logic is an extension of modal logic
in which the Kripke model valuations are replaced by first-order
structures, and every modal logic can be extended to its first-order
counterpart (see \cite[Chapter 9]{blackburn}).

As a remarkable application of modal logic, Dynamic Epistemic Logic
(\DEL) \cite{van2007dynamic,sep-dynamic-epistemic}, provides a logic
that describes changes on an epistemic model -- a particular Kripke
model -- through the use of action models and updates. The links between
\DEL and Epistemic Temporal Logic (ETL) have been studied
in \cite{maubert}. Under some hypothesis on \DEL action models, the
underlying ETL model is regular, allowing to decide central problems
in artificial intelligence. A problem of particular interest is
Epistemic Planning, an automated planning problem in the DEL setting
introduced by \cite{DBLP:journals/jancl/BolanderA11}. Epistemic
Planning has been thoroughly studied (see \cite{DELplan} for a
survey). Recently, an extension of Epistemic Planning in the setting of
Dynamic First-order Epistemic Logic (\DFOEL) has been introduced
in \cite{FOEL} -- following the pioneer approach of \cite{DTML} --
where first-order logic is used as a means to compactly specify an
essentially propositional input.

In \cite{FODEL}, the authors prove that so-called non-modal Epistemic
Planning in the \DFOEL setting remains decidable provided that the first-order
domain is finite. Even though one can foresee a full propositional
encoding of the entire problem hence its decidability, the authors
of \cite{FODEL} provide an elegant proof based on
bisimulations. However, this bisimulation approach does not apply when
we relax the finiteness assumption on the first-order domain.

In this paper we investigate this Epistemic Planning Problem where
infinite domain are allowed. Discarding the hopeless case of action
models with modal (pre- and post-) conditions, we focus on pure
first-order ones. Surprisingly, we establish that even though
pre-conditions are non-modal, the decidability of the Epistemic
Planning Problem is sensitive to the nature of post-conditions. To
finitely represent our input of the problem, we require that the
first-order structures are automatic, and regarding the action model
effects, we allow arbitrary first-order interpretations, thus strictly
extending the framework of \cite{FOEL}. In this setting, we prove that
Epistemic Planning with arbitrary pure first-order post-conditions
(\EPPnm) is undecidable, while Epistemic Planning with quantifier-free
post-conditions (\EPPnmqf) is decidable. For the latter case, our
proof is involved since the infinite-domain structures generated
along histories are infinitely many, in general.

The paper is organized as follows. In Section~\ref{sec:dfoel}, we
define the full framework of \DFOEL. In Section~\ref{sec:epdfoel}, we
introduce the First-order Epistemic Planning problem arising in this framework, and we 
establish our results: the undecidability of \EPPnm in
Section~\ref{sec:EPP1} and the decidability of \EPPnmqf in
Section~\ref{sec:EPP0}. We conclude on these achievements in
Section~\ref{sec:Conclusion}.

%\spcom{a comment}

%Guess which part of the sentence Sophie has just %\spch{changed}?

\section{Dynamic First-Order Epistemic Logic}
%======================
\label{sec:dfoel}
In this section we describe our proposal for \DFOEL, inspired
from \cite{FOEL} but that offers a wider expressiveness in both the
epistemic and the actions models: first, we allow one to consider
infinite (but still finitely presentable) first-order structures, and second, we relax the
action model post-conditions, \ie the predicate updates, as
arbitrary \emph{first-order interpretations} (in the sense of model
theory, see \cite{hodges1997shorter}) but where, for semantical reasons, the domain
remains unchanged.

We will see that this extra expressiveness yields an undecidable
epistemic planing problem. We will then better control the
expressiveness to retrieve decidability, still in a setting that
strictly extends the results from \cite{FOEL} for allowing
quantifier-free predicate updates, that in general may involve an
infinite set of tuples.

For the rest of this paper, we let $\signature$ be a first-order
signature. For a predicate $\apred \in \signature$ of arity $\arity$, we
take the convention to write $\apred(\arity) \in \signature$. Also, we let $\Var$
be a countably infinite set of \emph{variables}, whose typical elements
are $x,y,z,x_1,\ldots$. Finally, $\agt$ is a finite set of \emph{agents}.

\subsection{Preliminaries on First-Order Epistemic Logic}
%--------------------
\label{sec:pdfoel}

We restrict our definition to pure relational first-order structures
(\ie no functions in the signature)\footnote{This is no loss of expressiveness since
functions and constants can be modeled by predicates through their
graph.}.

%\begin{definition}
%  A \emph{signature} is a tuple $\sign*$ with $\Var$ a countable
%  finite set of \emph{variables}, $\Pred$ a finite set of
%  \emph{predicates} symbols.

%% and
%%   $\typeassign$ a type assignement map that
%%   satisfies:
%%   \begin{enumerate} \item For $\avar \in \Var$,
%%   $\typeassign(\avar) \in \{\agt,\obj\}$; \item For
%%   $\aconst \in \Const$, $\typeassign(\aconst) \in \{\agt,\obj\}$ \item
%%   For $\apred \in \Pred$, then $\typeassign(\apred) =
%%   (\typeassign(\apred)_1,...,\typeassign(\apred)_{ar(\apred)}) \in \{ \agt,
%%   obj \}^{ar(\apred)}$, where $\arity$ is the arity of
%%   $\apred$.  \end{enumerate} The set of terms is defined as $T :=
%%   V \cup C$.
%% \end{definition}
%\end{definition}

\begin{definition}
  The \emph{language} of First-Order Epistemic Logic (\FOEL) $\Lang$
  extends first-order logic with epistemic modalities. It is given by
  the following syntax:
  $$\Lang \ni \aformula,\aformulab ::= \atomicformula
  ~|~ \lnot \aformula ~|~ \aformula \land \aformulab ~|~ \forall
  x \aformula ~|~ \modal_\agent \aformula$$ where
  $\avar, \avar_1,...,\avar_{\arity} \in \Var$, $\agent\in \agt$ and $\apred(\arity) \in
  \signature$.  An \emph{atomic formula} is a formula of the form
  $\atomicformula$.  We denote by $\nmLang$ the pure first-order
  logic, that is the $\modal_{\agent}$-free fragment of $\Lang$, and
  by $\nmpLang$ the sub-set of $\nmLang$ with only quantifier-free
  formulas.
\end{definition}
Clearly $\nmpLang \subset \nmLang \subset \Lang$.
%\begin{example}
%\fbox{on met ?}
%\end{example}

In an epistemic logic setting, models are based on Kripke models, with
possible worlds related through epistemic relations, one for each
agent. Such a relation, say for agent $a$, specifies which worlds of
the Kripke models, agent $a$ cannot distinguish. In the richer setting
of \DFOEL, we consider \emph{first-order epistemic models} where each
possible world is assigned an entire first-order structure.
%In \DFOEL, frames are also given a domain.
\begin{definition}
\label{def:foem}
  A \emph{first-order epistemic model} (or simply \emph{epistemic
  model}) over domain $\Dom$ is a structure $\model*$ where:

  \begin{enumerate}
    \item $\Worlds$ is a non-empty set of \emph{worlds};
    \item For each $\agent \in \agt$, $\aeprel \subseteq \Worlds \times \Worlds$ is an \emph{accessibility relation}. Denote by $\aeprel(\aworld) := \{ \aworldb \in \Worlds ~|~ (\aworld, \aworldb) \in \aeprel \}$
    \item For every $\aworld \in \Worlds$, $\ainter*$ is a $\Pred$-structure associated to $\aworld$.
  \end{enumerate}
\end{definition}

Remark that in \Cref{def:foem}, the structures $\ainter$ all share the
same domain $\Dom$, but that their predicate interpretations may differ.

We provide here our running example, borrowed from formal language
 theory, that we will incrementally enrich as we progress along the
 paper.

\begin{example}
\label{ex:model}
Given regular languages $\wordlang$, $\wordlang[0]$,...,$\wordlang[n]$
 over an alphabet $\Sigma$, we first introduce the signature
 $\signatureex =
 (\predlang(1),\predlang[0](1),...,\predlang[m](1),\cdot(3),\predlangcompute(1))$,
 and the single-agent epistemic model $\modelex =
 (\{\aworldex\}, \aeprel[], \ainter)$\footnote{Notation $\aworldex$ for this single world relates to a starting computation of the language of interest.} where:
\begin{itemize}
%  \item $\Worlds := \{\aworld\}$%\{\wordlang[0],   ...,\wordlang[m] \}$

\item $\aeprel[] :=\{(\aworldex,\aworldex)\}$;%W \times W$

\item $\ainter[\aworldex]$ has domain $\Dom=\Sigma^*$ and the predicates interpretations are $\apredinterp[\predlang][\aworldex] :=
  \wordlang$, and $\apredinterp[\predlang[i]][\aworldex] := \wordlang[i]$ for
  every $0 \leq i \leq m$;

\item  $\apredinterp[\cdot][\aworldex]:= \{(\aword, \awordb, \awordc) \in (\Sigma^*)^3  ~|~ \aword = \awordb\awordc\}$;

\item $\apredinterp[\predlangcompute][\aworldex]:=\emptyset$, whose role is to interatively compute a language.
\end{itemize}
\end{example}
\begin{definition}
  Let $\model$ be an epistemic model over domain
  $\Dom$. An \emph{assignment} is a mapping $\val
  : \Var \rightarrow \Dom$, and an $\avar$\emph{-variant}
  $\val_{\avar}$ of $\val$ is an assignment \suchthat
  $\val_{\avar}(\avarb) = \val(\avarb)$ for all
  $\avarb \in \Var \sdif \{ \avar \}$.
\end{definition}

\begin{definition}
  Let $\model$ be a model and $\assignment$ be an assignment of the variables in $\Var$. The \emph{satisfaction} relation between an epistemic model $\model$ and a formula of $\Lang$ is given inductively by :
  \begin{itemize}
    \item $\mdls{\atomicformula}$ iff $(\val(\avar_1),...,\val(\avar_{\arity})) \in \apredinterp$ for all $\apred(\arity) \in \Pred$.
    \item $\mdls{\lnot \aformula}$ iff not $\mdls{\aformula}$.
    \item $\mdls{\aformula \land \aformulab}$ iff $\mdls{\aformula}$ and $\mdls{\aformulab}$.
    \item $\mdls{\forall x \aformula}$ iff $\mdls[\model][\aworld][\val_{\avar}]{\aformula}$ for every $\avar$-variant $\val_{\avar}$ of $\val$.
    \item $\mdls{\modal_{\agent} \aformula}$ iff $\mdls[\model][\aworldb][\val]{\aformula}$ for all $\aworldb \in \aeprel(\aworld)$.
  \end{itemize}
  Note that if $\aformula \in \nmLang$ \ie $\aformula$ is modal-free, then $\mdls{\aformula}$ iff $\ainter \models_{\val} \aformula$.
We write $\mdls*{\aformula}{}$ whenever for any $\val$, $\mdls{\aformula}$; in particular, if $\aformula$ has no free variables.
\end{definition}

It is clear that classical propositional epistemic logic is a fragment
of \FOEL where predicate symbols have all arity $0$.

\subsection{Dynamic First-Order Epistemic Logic}
%-----------------------
\label{sec:fdfoel}
We now enrich the setting of \FOEL with \emph{action models} that
provide the dynamics, via the \emph{update product}, similarly to the
approach in \DEL.  An action is a Kripke model whose elements
are \emph{events} with their respective \emph{precondition} and their \emph{postconditions}.

\begin{definition}
  An \emph{action model} is a tuple $\event*$
  where:
  \begin{enumerate}
\item $\Events$ is a non-empty finite set
  of \emph{events}.

\item For $\agent \in \agt$,
  $\edgeformula \subseteq \Events \times \Events$ is an accessibility
  relation.

\item $\pre : \Events \rightarrow \Lang$ assigns to each
  $\aevent \in \Events$ a \emph{precondition}, that is a
  closed\footnote{meaning without
  free-variables.} \nmLang-formula.

\item $\post
  : \Events \rightarrow (\Pred \rightarrow \Lang)$ assigns to each
  $\aevent \in \Events$ a postcondition
  $\post*(\avar_1,...,\avar_{\arity})$ for each $\apred(\arity) \in \Pred$.
\end{enumerate}
\end{definition}
We distinguish particular action models.
\begin{definition}
\label{def:nmevents}
$\event*$ is \emph{non-modal} whenever:
\begin{itemize}
\item $\pre : \Events \rightarrow \nmLang$, and
\item $\post   : \Events \rightarrow (\Pred \rightarrow \nmLang)$.
\end{itemize}
\end{definition}

\begin{example}
\label{ex:event}
We define the action model, $\eventex$ that represents the event of the application of complement, and union  and concatenation with languages among
$\wordlang[0]$, \dots, $\wordlang[m]$. Formally, we let  $\eventex:= (\Eventsex, Q, \pre, \post)$ with
\begin{itemize}
  \item $\Eventsex := \{\eventcompl\} \sunion \bigcup_{1 \leq i \leq
  m} \{ \eventunioni[i]\} \sunion \bigcup_{1 \leq i \leq
  m} \{ \eventconcati[i]\}$

\item $Q := \{(\aevent,\aevent)\,|\, \aevent \in \Events \}$;

\item for every $\aevent \in \Events$, \begin{itemize} \item
  $\pre(e) := true$;

\item for every predicate $\apred$ among
  $\predlang$, $\predlang[0]$, $\ldots,\predlang[m]$, and $\cdot$,
  $\post(\aevent)(\apred)=\apred$

\item regarding predicate $\predlangcompute$, for every $0\leq i \leq m$: \\
$$\begin{cases}
   \post(\eventunioni[i])(\predlangcompute)(x) := \predlangcompute(x) \lor \predlang[i](x) \\
    \post(\eventcompl)(\predlangcompute)(x) := \lnot \predlangcompute(x)\\
    \post(\eventconcati[i])(\predlangcompute)(x) := \exists y \exists z \predlangcompute(y) \land \predlang[i](z) \land \cdot(x,y,z)
\end{cases}$$
\end{itemize}
\end{itemize}
%\fbox{explain a bit more the role of these events}
\end{example}
The update of an epistemic state with an action model filters the
worlds that verify the preconditions and updates their structure
through the postconditions: the domain is unchanged, but predicates might be updated.
%by an arbitrary \emph{first-order interpretation}.

\begin{definition}
  Let $\model*$ and $\event*$ be given. The \emph{product update} of $\model$ and $\event$, is the epistemic model defined by
$$\upmodel*[\Worldsb][\aepfamb][\ainterfam[\aworld][\Worldsb]],$$
  where:
  \begin{itemize}
    \item $\Worldsb = \{ (\aworld,\aevent) \in \Worlds \times \Events : \mdls*{\pre*} \}$; we will simply write $\uphist[\aworld][\aevent]$ instead $(\aworld,\aevent)$.
    \item for an agent $\agent \in \agt$, $\aworld \aevent \aeprelb \aworldb \aeventb $ iff $\aworld \aeprel \aworldb$ and $\aevent \edgeformula \aeventb$.
    \item for $\aworld \aevent \in \Worldsb$, $\ainter*[\aworld \aevent][\Dom][\apred]$ where for each $\apred(\arity) \in \Pred$, $\apredinterp[\apred][\aworld \aevent]$ is the set of tuples $\tuple{\aelem}[\arity]$ such that $$\mdls[\model][\aworld][[\avar_1 \mapsto \aelem_1,...,\avar_{\arity} \mapsto \aelem_{\arity}]]{\post*(\avar_1,...,\avar_{\arity})}$$
     \end{itemize}
\end{definition}

A post-condition can be for example $\post*(\avar_1, \avar_2)
= \apred(\avar_1, \avar_2) \lor \exists \avarb
(\apred(\avar_1, \avarb) \land \apred(\avarb, \avar_2))$ which
approximates the transitive closure of the binary predicate $\apred$
as iterated updates progress.

\begin{example}
\label{ex:update}Returning to our running example on languages, the model $\modelex \otimes \eventex$ is such that:
\begin{itemize}
  \item $\Worldsb = \{\aworldex\} \times E$;

\item $\aeprelb[]
  = \{(\aworldex\aevent,\aworldex\aevent) \,|\, \aevent \in \Events\}$;

\item
  $\begin{cases}
  \apredinterp[\predlangcompute][\aworldex\eventcompl]= \Sigma^*\setminus \apredinterp[\predlangcompute][\aworldex]\\

  \apredinterp[\predlangcompute][\aworldex(\eventunioni[i])] = \apredinterp[\predlangcompute][\aworldex] \sunion \wordlang[i]\\
\apredinterp[\predlangcompute][\aworldex(\eventconcati[i])] =  \apredinterp[\predlangcompute][\aworldex] \cdot \wordlang[i]
\end{cases}
$
\end{itemize}

Thus, from world $\aworldex$, after applying the sequence of events 
$(\eventunioni[\wordlang[1]])(\eventunioni[\wordlang[2]])(\eventcompl)(\eventconcati[\wordlang[3]])$, 
one reaches a first-order structure
where $\predlangcompute$ is interpreted as 
$(\wordlang[1] \sunion \wordlang[2])^{c}\cdot \wordlang[3]$; otherwise said:
$$\apredinterp[\predlangcompute][\aworldex(\eventunioni[1])(\eventunioni[2])(\eventcompl)(\eventconcati[3])]
= (\wordlang[1] \sunion \wordlang[2])^{c}\cdot \wordlang[3]$$

One can easily enrich the setting, to capture other language operations such
as complementing inside $\wordlang[i]$, etc.
\end{example}

In the following, we aim at capturing the single infinite epistemic
model comprised of all the updates, in a way similar to the \DEL
structure introduced in \cite{DBLP:journals/corr/AucherMP14,DELplan}
for proposition \DEL.

\subsection{The infinite epistemic model of histories}
%----------------------------------------------------

Given $\model$ an epistemic model and $\event$ an action model,
  we consider the family of updates $(\itupdate)_{\aint \in \Int}$
  defined by:
  \begin{itemize}
  \item $\itupdate[0] = \model$, and
  \item
  $\itupdate[\aint+1] = \itupdate \otimes \event$.
  \end{itemize}

The epistemic $\itupdate$ is called the $\aint$-th update, and we introduce the following notations for its components:
 $$\itupdate*[\aint][\Worlds_{\aint}][\aepfam[\agent][\aint]][\ainterfam[\ahist][\Worlds_{\aint}]].$$

We call a \emph{history} any element $\ahist \in \Worlds_{\aint}$, for
some $n$, which is of the form $\ituphist[\aworld]$, where
$\aworld \in \Worlds$ and
$\aevent_1,\ldots,\aevent_\aint \in \Events$; we then say that history
$\ahist$ \emph{starts from} $\aworld$. Given $\aworld \in \Worlds$, we denote by $\Histw$ be the set of histories
starting from $\aworld$.\\

We now gather all possible iterated updates in a single infinite epistemic model.

\begin{definition}[The epistemic model of histories]
\label{def:epistemicmodelhisto}
The \emph{epistemic model of histories} is
$$\histstruct*$$
where
\begin{itemize}
\item   $\Hist  := \sUnion_{\aint \in \Int} \Worlds_{\aint}$ is
  the set of \emph{histories};

\item $\aeprel
  := \sUnion_{\aint \in \Int} \aeprel[\agent][\aint]$ is the
  accessibility relation over all the histories for agent for
  $\agent \in \agt$;

\item $\ainter[\ahist]$ is the first-order
  structure resulting from the updates induced by the sequence of
  events along history $\ahist$.
  \end{itemize}
\end{definition}

In the next section, we introduce epistemic planning problems in
 the \FOEL framework in a way that exploit the prism provided by the
 infinite model $\fohist$.

\section{Epistemic Planning in \DFOEL}
%============================
\label{sec:epdfoel}
We consider a planning problem over the whole epistemic model
$\fohist$ of histories. Because the model $\model$ is infinite in
general, we rely on finite-state automata to represent it, leading to
classic \emph{automatic structures}.

More precisely, we define the class of first-order epistemic models with finitely many
worlds but whose first-order interpretations in worlds are automatic
structures, and then introduce the \emph{first-order epistemic
planning problem} in this setting.
%and discuss its decidability.

\subsection{Automatic presentations of epistemic models}
%-----------------------------------------
\label{sec:as}

Automatic structures are first-order structures with no function
symbols, such that their domain and predicates are regular, \ie
recognized by finite-state automata. An \emph{automatically presentable}
structure is a first-order structure isomorphic to an automatic
structure; the reader may refer to \cite{SURVEY} for an exhaustive survey.

\begin{definition}
An \emph{automatic presentation} over the set of predicates $\signature$ is
a finite tuple $\Apresent*$ of finite-state automata, where:
\begin{itemize}
\item $\transddom$ is an automaton over alphabet $\alphabet$, and
\item for every $\apred(\arity) \in \Pred$, $\transdi[\apred]$ is an automaton over alphabet $(\alphabet \sunion \{\pad\})^k$.
\end{itemize}
An automatic presentation $\Apresent*$ denotes a
first-order structure over signature $\Pred$ defined by $$\APstruct*$$
\end{definition}
  A first-order structure $\FOstruct*$ %over signature $\Pred$
  is \emph{automatic} if there exists a bijection $\encfunc: S \to \lang(\transddom)$, called
  an \emph{encoding},  for some
  automatic presentation $\Apresent*$, that is an isomorphism between $\FOstruct$
  and $\APstruct$.

%Automatic structures have a smooth connection with first-order logic:
\begin{theorem}[\cite{AS}]
\label{theo:FOinAS}
The first-order theory of an automatic structure is decidable.
\end{theorem}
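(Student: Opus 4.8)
The plan is to prove the stronger, constructive statement that for every first-order formula $\aformula(\avar_1,\ldots,\avar_\arity)$ over $\Pred$, with free variables among $\avar_1,\ldots,\avar_\arity$, one can effectively build a finite-state automaton recognising exactly the set of tuples $(w_1,\ldots,w_\arity)$ of words in $\lang(\transddom)$ whose \emph{convolution} satisfies $\aformula$ in $\APstruct$. Here the convolution reads the $\arity$ words synchronously letter-by-letter over the product alphabet $(\alphabet \sunion \{\pad\})^\arity$, padding shorter words with $\pad$, exactly as in the definition of the $\transdi[\apred]$. The decidability of the first-order theory then follows by specialising to sentences. I would carry out the argument by structural induction on $\aformula$, maintaining as invariant that the associated automaton is constructible from the presentation $\Apresent*$.

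For the base cases, an atomic formula $\apred(\avar_{i_1},\ldots,\avar_{i_k})$ is handled by taking the automaton $\transdi[\apred]$ and \emph{cylindrifying} it: one reindexes and duplicates tracks so that it reads the full $\arity$-track convolution, acting on tracks $i_1,\ldots,i_k$ while requiring that every track encode a word of $\lang(\transddom)$, which amounts to intersecting with the $\arity$-fold domain language. Equality $\avar_i = \avar_j$ is recognised by the automaton checking that tracks $i$ and $j$ carry the same word, which is plainly regular. All these automata are effectively obtained from $\Apresent*$.

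For the inductive cases, the Boolean connectives reduce to effective closure properties of regular languages: $\aformula \land \aformulab$ is handled by the product automaton after aligning the two formulas on a common list of free variables, and $\lnot \aformula$ by complementation, intersected with the $\arity$-fold domain language so that only genuine tuples of domain elements are retained. The quantifier case is the heart of the proof: $\exists \avar\, \aformula$ is recognised by \emph{projecting away} the track carrying $\avar$ from the automaton for $\aformula$, which yields a nondeterministic automaton for the required set; $\forall \avar\, \aformula$ is then obtained through the equivalence $\forall \avar\, \aformula \equiv \lnot \exists \avar\, \lnot \aformula$.

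I expect the projection step to be the main obstacle, and the place where the padding convention must be handled with care. Erasing one track from a convolution is a length-altering operation, because the witness word bound by $\exists \avar$ may be strictly longer than the surviving words; one must re-synchronise the $\pad$ symbols so that the projected automaton still reads a well-formed convolution of the remaining tracks, and then determinise. Once this is established, a \emph{sentence} $\aformula$ (no free variables) is mapped to an automaton over the nullary convolution, and $\APstruct \models \aformula$ holds iff this automaton is nonempty, a decidable condition. Since every construction above is effective, the whole procedure is an algorithm deciding the first-order theory, which is the claim; transferring it to an arbitrary automatic structure $\FOstruct$ is immediate via the isomorphism supplied by any encoding $\encfunc$.
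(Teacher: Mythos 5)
Your proposal is correct and is precisely the classical argument behind this theorem: the paper itself gives no proof, citing \cite{AS}, and the proof found there is exactly your structural induction building automata for definable relations via cylindrification, Boolean closure, and projection with re-synchronisation of the padding symbol. The subtlety you flag at the projection step (the erased witness track may be longer than the surviving words) is indeed the one point requiring care, and your treatment of it, together with the final emptiness check for sentences and transfer along the encoding isomorphism, matches the standard proof.
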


In our setting, we allow the domain of the (first-order) epistemic
models $\model$ to be infinite as long as the first-order structure attached
to each world is automatic. Formally,

\begin{definition}[Automatic (first-order) epistemic models]
  An epistemic model $\model*$ is \emph{automatic} if $\Worlds$ is finite and $\ainter$ is automatic, for each
  $\aworld \in \Worlds$.
  \end{definition}

%% This definition is consistent with the dynamic aspects of FODEL as the
%% update of an automatically presentable epistemic model is
%% automatically presentable.

%% \begin{proposition}
%%   Let $\mathcal{M}$ be an epistemic model, say $\mathcal{M} \otimes \mathcal{E}$ is defined. \\
%%   For $w \in \mathcal{M}$, if $(D,(I(p,w))_{p \in P})$ is automatically presentable, then for $e \in \mathcal{E}$, if $(w,e) \in \mathcal{M} \otimes \mathcal{E}$, $(D,(I'(p,(w,e)))_{p \in P})$ is automatically presentable with the same encoding function.
%% \end{proposition}

%% \begin{proof}
%%   Suppose that $\mathcal{M}$ is automatically presentable with $Enc$ as encoding function. \\
%%   For $(w,e) \in \mathcal{M} \otimes \mathcal{E}$, $p \in P$, we have \\
%%   $\begin{aligned}
%%     Enc(I'(p,(w,e)) &= Enc(I(p,w) \backslash p_e^-(w) \cup p_e^+(w)) \\
%%      &= Enc(I(p,w)) \backslash Enc(p_e^-(w)) \cup Enc(p_e^+(w)) \text{ (as } Enc \text{ is injective).}
%%   \end{aligned}$ \\
%%   As $p_e^+(w)$, $p_e^-(w) \subseteq I(C,w)^{ar(\apred)}$ and $C$ is finite, we have that $Enc(p_e^+(w))$ and $Enc(p_e^-(w))$ are finite. \\
%%   As a consequence, $Enc(I(p,w))$ is regular,
%%   therefore $Enc(I'(p,(w,e))$ is regular. \\
%%   $ie.$ for $(w,e) \in \mathcal{M} \otimes \mathcal{E}$, $(D, I'(p,(w,e))_{p \in P})$ is automatically presentable with $Enc$ as the encoding function. \\
%% \end{proof}
It can be seen that the predicates updates $\post(\aevent)(\apred)$
for a non-modal action model are mere \emph{first-order
interpretations} which by \cite[Proposition 5.2]{AS} preserve
automaticity. As a consequence, we have the following.

\begin{proposition}
\label{prop:automaticupdates}
If $\model$ is an automatic and $\event$ is non-modal, then $\upmodel$ is automatic.% is also an automatic epistemic model.
\end{proposition}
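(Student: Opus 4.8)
The plan is to verify that the product update $\upmodel$ satisfies the two defining conditions of an automatic epistemic model, namely that its set of worlds is finite and that the first-order structure attached to each world is automatic. The finiteness of the worlds is immediate: by definition $\Worldsb \subseteq \Worlds \times \Events$, and both $\Worlds$ (since $\model$ is automatic, hence has finitely many worlds) and $\Events$ (by definition of an action model) are finite, so their product, and a fortiori any subset of it, is finite. The heart of the argument is therefore to show that for each surviving world $\uphist[\aworld][\aevent] \in \Worldsb$, the updated structure $\ainter[\uphist]$ is automatic.

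The key observation, already flagged in the paragraph preceding the statement, is that when $\event$ is non-modal, each postcondition $\post(\aevent)(\apred)$ is a formula of $\nmLang$, i.e.\ a pure first-order formula over the signature $\signature$. The new interpretation of each predicate $\apred(\arity)$ in world $\uphist$ is precisely the set of tuples $\tuple{\aelem}[\arity]$ satisfying this first-order formula in the old structure $\ainter[\aworld]$. This is exactly the data of a first-order interpretation in the model-theoretic sense: the new $\signature$-structure is defined inside the old one by the tuple of defining formulas $(\post(\aevent)(\apred))_{\apred \in \signature}$, with the domain left unchanged (so the domain formula is trivially $x = x$). I would make this correspondence explicit, matching the definition of $\apredinterp[\apred][\uphist]$ against the notion of interpretation from \cite{hodges1997shorter}.

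Having cast the update as a first-order interpretation, I would invoke \cite[Proposition 5.2]{AS}, cited in the excerpt, which states that automatic structures are closed under first-order interpretations: applying a first-order interpretation to an automatic structure yields an automatic structure. Since $\ainter[\aworld]$ is automatic (as $\model$ is an automatic epistemic model) and $\ainter[\uphist]$ is obtained from it by the interpretation just described, we conclude that $\ainter[\uphist]$ is automatic. As this holds for every surviving world, all worlds of $\upmodel$ carry automatic structures, and combined with finiteness of the world set this establishes that $\upmodel$ is automatic.

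The main obstacle, and the only point requiring genuine care, is confirming that the update really does fall under the hypotheses of \cite[Proposition 5.2]{AS} rather than some more permissive notion. Two subtleties deserve attention: first, that the domain genuinely remains fixed (the definition of product update explicitly keeps $\Dom$ unchanged, so the interpretation is domain-preserving and we need not worry about definable substructures or quotients, which is the usual complication in interpretation arguments); and second, that non-modality is essential, since only then are the postconditions $\nmLang$-formulas evaluated purely inside $\ainter[\aworld]$ — the remark after the satisfaction definition, that $\mdls{\aformula}$ iff $\ainter \models_{\val} \aformula$ for modal-free $\aformula$, is what guarantees the update depends only on the single automatic structure $\ainter[\aworld]$ and not on the epistemic accessibility structure. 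I would state these two points briefly to make the reduction to \cite[Proposition 5.2]{AS} airtight.
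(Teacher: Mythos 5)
Your proposal is correct and follows exactly the paper's own argument: the paper likewise observes that for a non-modal action model the predicate updates are first-order interpretations (with unchanged domain) and invokes \cite[Proposition 5.2]{AS} to conclude that automaticity is preserved, with finiteness of the world set being immediate. Your write-up merely makes explicit the details the paper leaves implicit, including the role of non-modality in ensuring the postconditions are evaluated purely inside the single automatic structure $\ainter[\aworld]$.
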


\subsection{The First-order Epistemic Planning Problem}
%--------------------------
\label{sec:ep}
As originally defined in \cite{DBLP:journals/jancl/BolanderA11}, an
instance of the epistemic planning problem is composed of a
(propositional) epistemic model, a distinguished world in this model,
an action model and a first-order epistemic formula called
the \emph{goal formula}. The epistemic planning problem is to decide
whether or not there exists an executable sequence of events from the
distinguished world so that the goal formula holds.

We restate the epistemic
planning problem in order to allow inputs with possibly infinite domains.
\begin{definition}[The Epistemic Planning Problem (\EPP)] \
\begin{itemize}
\item[] {\bf Input:} an \underline{automatic} epistemic model $\model$, a distinguished world $\aworld$ in $\model$, an action model $\event$, and a first-order epistemic formula
$\goalformula$;
\item[] {\bf Output:} ``yes'' if there is a history $\ahist$ starting from $\aworld$ % \in \Histw$
such that $\fohist, \ahist \models \goalformula$, otherwise ``no''.
\end{itemize}
\end{definition}

\begin{example}
\label{ex:EPP}
We can rephrase formal language questions as epistemic planning ones,
  such as: Given a language $\wordlang[]$, can it be built using
  generators $\wordlang[0]$,...,$\wordlang[n]$ and operations among union,
  complement and concatenation? Indeed, this question amounts to
  querying the epistemic planning problem with inputs the epistemic
  model $\modelex$, the action model $\eventex$ and the epistemic
  formula $\goalformulaex$ expressing that languages
  $\predlangcompute$ and $\wordlang$
  coincide \ie $\goalformulaex := \forall x
  (\predlangcompute(x) \equivaut \predlang(x))$.
\end{example}

Epistemic planning problems have been widely investigated in the
literature \cite{DELplan,DBLP:journals/jancl/BolanderA11,MD2,DBLP:conf/ijcai/CongPS18,DBLP:journals/corr/AucherMP14,FOEL,FODEL}.
The first family of
contributions \cite{DELplan,DBLP:journals/jancl/BolanderA11,MD2,DBLP:conf/ijcai/CongPS18,DBLP:journals/corr/AucherMP14}
consider propositional epistemic and action models. This setting is
clearly captured by our framework, simply because propositional logic
can be embedded into first-order logic. It is well-known that, in the
propositional setting, the epistemic planning problem is undecidable
as soon as one allows modal event pre-condition formulas
(\cite{DELplan}). As a corollary, we can state the following.

\begin{theorem}
  $\EPP$ is undecidable.
\end{theorem}

In order to make the question about \EPP less trivial, we consider its restricted
 variant, written \EPPnm, where pre-conditions and
 post-conditions formulas are \emph{non-modal}, \ie for each event
 $\aevent$, the formulas $\pre*$ and $\post*$ are $\nmLang$-formulas.\\

%The other setting where the epistemic planning problem has been
%addressed is the one of \cite{FOEL,FODEL} with the strong restriction
%that the first-order domain of each world is finite. This assumption
%makes the epistemic planning problem of \cite{FOEL} fall into our
%problem \EPP since finite domain yield automatic epistemic
%models. In \cite{FOEL}, it is shown that under the assumption of finite
%first-order domains, the problem \EPPnm is decidable.

We below show that the problem \EPPnm is undecidable
(Theorem \ref{theo:EPPun}). This makes the decidability result
of \cite{FOEL} more anecdotal for failing in the setting of infinite
first-order domains. Additionally, we establish in \Cref{theo:EPPzero}
that there is room for a decidable subcase of \EPPnm,
written \EPPnmqf, where post-conditions of the input action model are
non-modal but also quantifier-free, \ie all formulas $\post*$ belong
to $\nmpLang$. This latter result makes the decidability of Epistemic
Planning in the (propositional action models) \DEL setting  a mere
corollary.

\subsection{\EPPnm is undecidable}%arbitrary (non-modal) quantifier predicate updates}
%------------------------------------------------------------------------------------
\label{sec:EPP1}
Because with a $\post* \in \nmLang$, we can capture the
transitive closure of a graph, we exhibit a reduction from the emptiness
problem of Turing machines (\TME). The problem \TME is known
to be undecidable \cite[Theorem 9.10]{hopcroft2001introduction} and is
defined as follows.
\begin{itemize}
\item[] {\bf Input:} a Turing machine $\TM$;% = (S, \Gamma, q_0, \delta, F)$;
\item[] {\bf Output:} do we have $\lang(\TM)=\emptyset$?
\end{itemize}

The reduction from \TME to \EPPnm  relies on the fact that the configuration
graph of a Turing machine is automatically presentable \cite[Lemma
5.1]{DBLP:journals/lmcs/KhoussainovNRS07}, and on the fact that we can
design an event that updates the binary successor predicate of this
graph in a way that this iterated updates approximates its transitive closure. \\

We now formalize this reduction.
%\begin{comment}
Let $\TM %= (Q, \Gamma, q_0, \delta, F)
$ be a Turing
machine. We consider the first-order structure $\graphTM$ over the signature comprised of
the binary predicate $\apred$, and two unary predicates $i$ and $f$
defined by: $$\graphTM :=
(\confTM, \apredinterp[\apred][\graphTM],\apredinterp[i][\graphTM],\apredinterp[f][\graphTM])$$
where its domain $\confTM$ is the set of configurations of $\TM$,
$\apredinterp[\apred][\graphTM]$ is the binary relation $\edgeTM$ composed of
configuration pairs such that $\TM$ can move from the former configuration to the
latter configuration, $\apredinterp[i][\graphTM]$ is the set of initial
configurations of $\TM$, and $\apredinterp[f][\graphTM]$ its the set of final
configurations.

%% Later, we will denote by $\edgeTM^{\leq k}$ the binary relation between
%% configurations defined by '$M$ can reach the latter from the former in
%% lesser than or equal to $k$ moves'.

As for any first-order structure, the structure $\graphTM$ can be embarked
into a single-agent single-world first-order epistemic model as follows:
$\modelTM=(\{\aworld\}, \{(w, w)\}, \graphTM)$.
%%\begin{itemize}
%  \item $\Worlds = \{ \aworld \}$
%    \item $R_a = \{ (w, w) \}$
  %% \item $\Dom=\confTM$
  %% \item $\apredinterp[i] = \{ (\alpha, 0, s_0) : \alpha \in \Gamma^* \}$
  %% \item $\apredinterp[f] = \Gamma^* \times \mathbb{N} \times F$
  %% \item $\apredinterp[p] = \,\, \edgeTM$
%%\end{itemize}

\begin{lemma}
%The first-order epistemic model
$\modelTM$ is an automatic epistemic model.
\end{lemma}

\begin{proof}
By \cite[Lemma 5.1]{DBLP:journals/lmcs/KhoussainovNRS07}, the
structure $(\confTM, \apredinterp[\apred])$ is automatically
presentable and an accurate look at the encoding of the structure used
in the proof of lemma shows that this encoding also makes the remaining relations $\apredinterp[i]$
and $\apredinterp[f]$ regular, which concludes.
\end{proof}

We next design the single-event action model
$$\eventTM = (\{\aevent\},
\{ (\aevent, \aevent) \}, \pre, \post)$$
where the effect of $\aevent$ is to augement the current interpretation
of predicate $\apred$ between configurations by its
self-composition. Formally, we let:
\begin{itemize}
%  \item $E = \{ e \}$
%  \item $\edgeformula[] = \{ (\aevent, \aevent) \}$
  \item $\pre(e) := \top$;
  \item $\post*(x_1,x_2) := \apred(x_1,x_2) \lor \exists y (\apred(x_1,y) \land \apred(y, x_2))$;
  \item $\post*[\aevent][i](x) := i(x)$ and $\post*[\aevent][f](x) := f(x)$, \ie the initial and final configurations remain unchanged.
\end{itemize}

Finally, we define the goal formula
$$\formulaTM:=\exists x \exists y (i(x) \land p(x,y) \land f(y))$$
stating the existence of an initial and a
final configuration related by $p$.

\begin{lemma}
$\TM$ is a positive instance of \TME if, and only if,
$\modelTM$, $\eventTM$, $\formulaTM$ is a positive instance
of \EPPnm.
\end{lemma}

\begin{proof}
By definition of $\post*$, a sequence $\aevent \aevent \ldots \aevent$
of updates makes predicate $p$ incrementally closer to $\edgeTM^*$,
the transitive closure of $\edgeTM$: indeed, one can show by induction
over $\ell$ that after $\ell$ triggers of event $\aevent$, two
configurations are related by (the interpretation of) $p$ if, and only
if, there is a path of length at most $\ell$ between them. Therefore,
$L(\TM) \neq\emptyset$\\
\begin{tabular}{lp{7cm}}
iff & some final configuration is
reachable from some initial configuration in $\graphTM$\\
iff & there exists a path of some length $\ell$ from some final configuration is to some initial configuration in $\graphTM$\\
iff & there exists some $\ell$ such that \\
& $\modelTM\eventTM^*, \aworld\aevent^\ell \models \exists x \exists y (i(x) \land p(x,y) \land f(y))$\\
\end{tabular} 
\end{proof}

%\end{comment}

This concludes the proof of \Cref{theo:EPPun}, which entails the following.

\begin{theorem}
\label{theo:EPPun}
The Epistemic Planning Problem \EPPnm is undecidable.
\end{theorem}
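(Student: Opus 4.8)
The plan is to establish undecidability of \EPPnm by a reduction from the emptiness problem \TME for Turing machines, which is undecidable by \cite[Theorem 9.10]{hopcroft2001introduction}. The core idea is that a Turing machine accepts some input exactly when some final configuration is reachable from some initial configuration in its configuration graph, and that reachability is precisely the transitive closure of the one-step move relation. Since a $\nmLang$-postcondition of the form $\post*(x_1,x_2) := \apred(x_1,x_2) \lor \exists y (\apred(x_1,y) \land \apred(y,x_2))$ can use a first-order quantifier, iterating the corresponding event approximates this transitive closure arbitrarily well, so I can detect reachability through epistemic planning.

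Concretely, I would proceed as follows. First, given a Turing machine $\TM$, I encode its configuration graph as the first-order structure $\graphTM$ over a binary predicate $\apred$ (for the move relation $\edgeTM$) and unary predicates $i$ and $f$ (for initial and final configurations), and embed it into the single-world single-agent epistemic model $\modelTM$. The first key step is to verify that $\modelTM$ is an \emph{automatic} epistemic model: by \cite[Lemma 5.1]{DBLP:journals/lmcs/KhoussainovNRS07} the relation $(\confTM, \edgeTM)$ is automatically presentable, and I would check that the same encoding of configurations as words keeps $\apredinterp[i]$ and $\apredinterp[f]$ regular, which is routine since initial/final status depends only on local features of a configuration word. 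Second, I define the single-event action model $\eventTM$ with the transitive-closure-approximating postcondition above (and identity postconditions on $i$ and $f$), and the goal formula $\formulaTM := \exists x \exists y (i(x) \land p(x,y) \land f(y))$.

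The heart of the correctness argument is an induction on the number $\ell$ of times the event $\aevent$ is triggered, showing that in the updated structure $\itupdate[\ell]$ two configurations are related by the interpretation of $p$ if and only if there is an $\edgeTM$-path of length \emph{at most} $2^{\ell}$ (or at least of every length up to some bound growing with $\ell$) between them. The base case is the original move relation, and the inductive step follows directly from the shape of $\post*$, which composes the current relation with itself. This yields the chain of equivalences: $\lang(\TM) \neq \emptyset$ iff a final configuration is reachable from an initial one, iff some finite-length $\edgeTM$-path connects them, iff there exists $\ell$ with $\modelTM\eventTM^*, \aworld\aevent^\ell \models \formulaTM$, which is exactly a positive instance of \EPPnm.

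The main obstacle I anticipate is not the reduction logic itself but confirming that the construction stays within the admitted input class of \EPPnm: I must ensure both that $\modelTM$ is genuinely automatic (the regularity of $i$, $f$ under the Khoussainov--Nerode encoding) and that the postcondition $\post*$ is a legitimate non-modal $\nmLang$-formula, which it is precisely because it is permitted to contain the existential quantifier $\exists y$ — this is the single feature that makes transitive-closure approximation possible and is exactly what the companion result \Cref{theo:EPPzero} forbids. The quantifier-freeness of the goal formula is irrelevant here since only postconditions are constrained; the essential point is that relaxing postconditions to full $\nmLang$ suffices to encode unbounded reachability, and hence Turing-machine acceptance.
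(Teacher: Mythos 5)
Your proposal is correct and follows essentially the same route as the paper: the same reduction from Turing-machine emptiness, the same automatically presentable configuration-graph model $\modelTM$ (with the same check that $i$ and $f$ stay regular under the encoding of \cite{DBLP:journals/lmcs/KhoussainovNRS07}), the same single-event action model whose quantified self-composition postcondition approximates transitive closure, and the same goal formula and equivalence chain. Your inductive invariant (pairs related by $p$ after $\ell$ triggers iff connected by a path of length at most $2^{\ell}$) is in fact stated more precisely than the paper's ``at most $\ell$'' bound; either version suffices for the argument.
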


As a preliminary step towards the decidability proof of \EPPnmqf
(\Cref{theo:EPPzero}), we reduce first-order epistemic planning to an
existential first-order formula model checking query in a first-order
structure $\asStruc$ derived from the epistemic model of histories $\fohist$.

%% \begin{definition}
%%   The event model $\amod$ is \emph{non-modal} if :
%%   \begin{itemize}
%%     \item for $e \in E$, $Pre(e) \in \mathcal{L}_0$.
%%     \item for $e \in E$ and for $p(t_1,...,t_{\arity{\apred}})$ where $t_1,...,t_{\arity{\apred}} \in C$, \\
%%     $Post(e)(p(t_1,...,t_{\arity{\apred}})) \in \mathcal{L}_0$
%%     \item for $(e,e') \in E \times E$, $Q(e,e') \in \mathcal{L}_0$
%%   \end{itemize}
%% \end{definition}

%% \begin{definition}
%%   Define the following classes of epistemic planning tasks :
%%   \begin{itemize}
%%     \item $MD(n) := \{ (s_0, A, \goalformula) :$ all actions in $A$ have a modal depth lesser or equal to $n \}$
%%     \item $NM := \{ (s_0, A, \goalformula) :$ all actions in $A$ are non-modals$\}$
%%     \item $FD := \{ (s_0, A, \goalformula) : s_0$ has finite domain$\}$
%%     \item $AP := \{ (s_0, A, \goalformula) : s_0$ is automatically presentable$\}$
%%   \end{itemize}

%%   \begin{theorem}
%%     PlanEx-$(MD(2) \cap FD)$ is undecidable.
%%   \end{theorem}

%%   \begin{proof}
%%     This has been demonstrated in the case of DEL in \cite{MD2}. As DEL is a fragment of FODEL, it follows that this problem is also undecidable in FODEL. \\
%%   \end{proof}

%%   \begin{theorem}
%%     PlanEx-$(NM \cap FD)$ is decidable.
%%   \end{theorem}

%%   \begin{proof}
%%     This is the main theorem of \cite{FODEL}. \\
%%   \end{proof}
%% \end{definition}

%% %  \spch{
%%   All in all to explain that with discard modal operators in both pre and post conditions.
%%   %}

\subsection{The first-order structure of the epistemic model of histories}
%----------------------------------------------------------------------------------
\label{sec:fohist-fostruct}
\label{def:FOH}
Fix $\model*$ a first-order epistemic model, and an action model $\event*$.\\

On the basis of $\histstruct*$, we consider the domain,
called the \emph{universe}, defined by
$$\asDom = \asDom*$$ where for every history $\ahist \in \Hist$, the
set $\hDom$ is a fresh copy of domain $\Dom$ from $\model$, meant to
denote the elements of the first-order structure
$\ainter[\ahist]$.

We consider the signature
$\signAS$, obtained from the $\signature$ (the signature of $\model$)
and three other kinds of predicates, and provide their interpretations to obtain
a first-order structure $\asStruc$. Formally,
$$\signAS*$$
where: %% and introduce We list all these predicates below
%% to form signature $\signAS*$, with their interpretation in $\asStruc$.
\begin{itemize}
\item each $\epsign$ has arity $2$ and is set to:
$\apredinterp[\epsign][\asStruc]$ as the pairs of histories related by
$\aeprel$ in $\histstruct$;

\item each $\predsign$ is reshaping of predicate
 $\apred(\arity) \in \Pred$ with arity
 $\arity+1$, and set to:
 $$\apredinterp[\predsign][\asStruc]:=\{(\ahist,\aelem_1,\ldots,\aelem_{\arity}) \,|\, (\aelem_1,\ldots,\aelem_{\arity}) \in \apredinterp[\apred][\ahist]] \}$$

\item predicate $\fromw$ has arity $1$, with $\apredinterp[\fromw][\asStruc]$ equals to the subset $\Histw$ of histories that start from $\aworld$;

\item predicate $\ofDom$ has arity $2$, with $\apredinterp[\ofDom][\asStruc]$ the binary relation between any history and each element of its domain $\hDom$.
\end{itemize}

%% We can now describe the first-order structure we need to handle in our proof.
%% \begin{definition}[The structure $\asStruc$]
%% \label{def:FOH}
%% Let $\fohist$ be an epistemic model of histories, with $\Dom$ the domain shared by all the histories. For each $\ahist \in \Hist$, we let $\hDom$ be a disjoint copy of $\Dom$ and we write $\copyh : \Dom \to \hDom$ for the natural bijection, and we define the set $\asDom:=\Hist \sunion \sdUnion_{\ahist \in \Hist} \hDom$.
%% Then, the \emph{first-order structure of} $\fohist$ is the $\signAS$-structure defined by:
%% %   $$\asStruc*$$
%% \[
%% \begin{array}{rl}
%% \asHist = & (\asDom, \epinterfam*,\\
%% & \aspredinterfam*,\\
%% & \fromwinterfam*,\\
%% & \ofDominter*)
%% \end{array}
%% \]
%%   %[\asDom][\aepfam][\fami{\sUnion_{\ahist \in \Hist} (\{ \ahist \} \times \apredinterp[\apred][\ahist])}{\apred}{\Pred}][\Histwfam][\sUnion_{\ahist \in \Hist} (\{ \ahist \} \times \hDom)]$$
%% \end{definition}

The obtained structure $\asStruc$ is called the \emph{first-order
structure of} $\fohist$, and boils down to being the following
$\signAS$-structure:
\[
\begin{array}{rl}
\asHist = & (\asDom, \epinterfam*,\\
& \aspredinterfam*,\\
& \fromwinterfam*,\\
& \ofDominter*)
\end{array}
\]

The structure $\asStruc$, equipped with first-order logic over
signature $\tau = ((\epsign[\agent])_{\agent \in \agt}, \\ (\predsign)_{\apred \in \Pred}, \ofHistwfam, \ofDom)$, is at least as expressive as the epistemic model
of histories $\fohist$ (see \Cref{prop:reductionEPP-FO-auxiliary}),
via the standard translation of \FOEL into first-order logic inspired
from \cite{blackburn}, defined as follows.
\begin{definition}
\label{def:ST}
  The standard translation $\ST$ of $\Lang$ into the first-order logic over signature $\signAS$ is inductively defined by:
  \begin{itemize}
    \item $\ST(\atomicformula) :=
    \left\{\begin{array}{l}
    \atomicformulay\\
     \land \lAnd_{\aintb = 1}^{\arity} \ofDom(\avarb,\avar_{\aintb})
    \end{array}
    \right.$
    \item $\ST(\lnot \aformula) := \lnot \ST(\aformula)$
    \item $\ST(\aformula \land \aformulab) := \ST(\aformula) \land \ST(\aformulab)$
    \item $\ST(\forall \avar \aformula(\avar)) := \forall \avar (\ofDom(\avarb, \avar) \imply \ST(\aformula(\avar)))$
    \item $\ST(\modal_{\agent} \aformula) := \forall \avarb' (\epsign(\avarb, \avarb') \imply \ST[\avarb'](\aformula))$
  \end{itemize}
\end{definition}

\begin{proposition}
\label{prop:reductionEPP-FO-auxiliary}
For any assignment $\val : \Var \rightarrow \Dom$ and any history $\ahist \in \Hist$, we let $\valh : \Var \rightarrow \hDom$ be defined by $\valh(\avar):=\copyh(\val(\avar))$. Then, for any formula $\aformula \in \FOEL$,
  $$\mdls[\histstruct][\ahist][\val]{\aformula} ~\text{iff}~~ \asStruc \models_{{\valh} [\avarb \mapsto \ahist]} \ST(\aformula),$$
\end{proposition}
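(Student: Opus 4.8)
The plan is to prove the equivalence by structural induction on $\aformula$, with the statement generalized over all assignments $\val$ and all histories $\ahist \in \Hist$ simultaneously, so that the induction hypothesis is available both at $\avar$-variants of $\val$ (for the quantifier case) and at histories accessible from $\ahist$ (for the modal case). Throughout, the two guiding facts are that, under the assignment $\valh[\avarb \mapsto \ahist]$, the world variable $\avarb$ denotes exactly the history $\ahist$, and that every individual variable $\avar$ is sent into the copy $\hDom$ via $\valh(\avar) = \copyh(\val(\avar))$, so that the relativizing guard $\ofDom(\avarb, \avar)$ is automatically satisfied.

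For the base case $\aformula = \atomicformula$, I would unfold the definitions: $\mdls[\histstruct][\ahist][\val]{\atomicformula}$ holds iff $(\val(\avar_1), \ldots, \val(\avar_{\arity})) \in \apredinterp[\apred][\ahist]$, which, by the definition of $\apredinterp[\predsign][\asStruc]$ transported along the copy map $\copyh$, is equivalent to $(\ahist, \valh(\avar_1), \ldots, \valh(\avar_{\arity})) \in \apredinterp[\predsign][\asStruc]$. The remaining conjuncts $\ofDom(\avarb, \avar_{\aintb})$ of $\ST(\atomicformula)$ hold because each $\valh(\avar_{\aintb}) = \copyh(\val(\avar_{\aintb}))$ lies in $\hDom$, which is precisely the $\ofDom$-image of $\ahist$. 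The Boolean cases $\lnot$ and $\land$ are immediate, as both $\ST$ and the satisfaction relation commute with the connectives.

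The two quantifier-like cases carry the content. For $\forall \avar \aformula$, the guard $\ofDom(\avarb, \avar)$ in $\ST(\forall \avar \aformula)$ restricts the range of $\avar$ in $\asStruc$ to exactly $\hDom$, which via the bijection $\copyh$ is in one-to-one correspondence with the whole domain $\Dom$ over which the $\avar$-variants $\val_{\avar}$ range in $\histstruct$; the case then closes using the identity $(\val[\avar \mapsto \aelem])_{\ahist} = \valh[\avar \mapsto \copyh(\aelem)]$ together with the induction hypothesis at the \emph{same} history $\ahist$. For $\modal_{\agent} \aformula$, the guard $\epsign(\avarb, \avarb')$, interpreted as the pairs of histories related for agent $\agent$, restricts $\avarb'$ to $\aeprel(\ahist)$, matching the quantification over accessible histories in the epistemic semantics; one then applies the induction hypothesis at each accessible history $\ahist'$, now reading $\avarb'$ as the world variable and invoking the translation $\ST[\avarb'](\aformula)$.

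I expect the modal case to be the main obstacle. The standard translation leaves the individual variables untouched when the world variable jumps from $\avarb$ to $\avarb'$, whereas in $\asStruc$ the individuals are organized as per-history copies $\hDom$; one must therefore verify that this transport is sound, which rests on the constant-domain design of $\histstruct$ (all histories share $\Dom$) together with the precise way the $\ofDom$-guards and the predicate atoms $\predsign$ are evaluated at the new world. Making this bookkeeping explicit --- so that the inductive invariant relating $\valh[\avarb \mapsto \ahist]$ to the genuinely per-world copies survives the modal jump --- is the only step that is not purely mechanical, and it is where I would concentrate the formal verification.
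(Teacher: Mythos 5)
Your proposal is correct and follows essentially the same route as the paper's proof: a structural induction on $\aformula$ with the same handling of the atomic case (transport along $\copyh$ plus the $\ofDom$ conjuncts), of $\forall$ (the $\ofDom$ guard cutting the range of the quantified variable down to $\hDom$, in bijection with $\Dom$), and of $\modal_{\agent}$ (the $\epsign$ guard matching quantification over $\aeprel(\ahist)$). The bookkeeping you single out as the main obstacle in the modal case is precisely what the paper disposes of with its parenthetical remark that $\hDom=\hDom[\ahistb]$ for accessible histories, so your extra caution there is warranted but does not change the argument.
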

\begin{proof}

We proceed by induction over $\aformula$:
 \begin{itemize}
 \item $\mdls[\histstruct][\ahist][\val]{\atomicformula}$ iff $(\val(\avar_1),...,\val(\avar_{\arity})) \in \apredinterp[\apred][\ahist]$\\
 iff (by definition of $\apredinterp[\predsign][\asHist]$) $(h, \valh(\avar_1),...,\valh(\avar_{\arity})) \in \apredinterp[\predsign][\asHist]$\\
 iff
$\asHist \models_{{\valh}[\avarb \mapsto \ahist]} \predsign(y,\avar_1,...,\avar_{\arity})$\\
iff
$\asHist \models_{{\valh}[\avarb \mapsto \ahist]} \predsign(y,\avar_1,...,\avar_{\arity}) \land \lAnd_{\aintb = 1}^{\arity} \ofDom(\avarb,\avar_{\aintb})$, and this latter formula is precisely $\ST(\atomicformula)$;
 \item the cases for formulas of the form $\lnot \aformula$ and $\aformula \land \aformulab$ is smooth;
 \item $\mdls[\histstruct][\ahist][\val]{\forall x \aformula}$\\
 iff $\mdls[\model][\ahist][\val_{\avar}]{\aformula}$ for every $\avar$-variant $\val_{\avar}$ of $\val$\\
iff (by induction) $\asStruc \models_{\valh[\ahist][(\val_{\avar})] [\avarb \mapsto \ahist]}
\ST(\aformula)$ for every $\avar$-variant $\valh[\ahist][(\val_{\avar})]$ of $\valh$;
\item $\mdls[\histstruct][\ahist][\val]{\modal_{\agent} \aformula}$ iff  for all $\ahistb \in \aeprel(\ahist)$, $\mdls[\model][\ahistb][\val]{\aformula}$\\
iff   for all $\ahistb \in \aeprel(\ahist)$, $\asStruc \models_{\valh[\ahistb][\val][\avarb' \mapsto \ahistb]} \ST[\avarb'](\aformula)$ (by induction)\\
iff $\asStruc \models_{{\valh}[\avarb \mapsto \ahist]} \forall \avarb' (\epsign(\avarb, \avarb') \imply \ST[\avarb'](\aformula))$ (since $\hDom=\hDom[\ahistb]$)\\
iff $\asStruc \models_{{\valh} [\avarb \mapsto \ahist]} \ST(\modal_{\agent}\aformula)$ (by definition of $\ST$).
 \end{itemize}
\end{proof}

An immediate corollary of
Proposition~\ref{prop:reductionEPP-FO-auxiliary} is a reduction of the
entire \EPP problem into the model-checking problem over $\asStruc$
against a first-order logic:
\begin{proposition}
\label{prop:reductionEPP-FO} There exists  $\ahist$ starting from $\aworld$ such that $\histstruct, \ahist \models \goalformula$ if, and only if, $\asStruc \models \exists \avarb (\ST(\goalformula) \land \ofHistw(\avarb))$.
\end{proposition}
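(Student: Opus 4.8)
The plan is to derive Proposition~\ref{prop:reductionEPP-FO} as a direct consequence of Proposition~\ref{prop:reductionEPP-FO-auxiliary}, which has already done the heavy lifting by establishing the formula-by-formula correspondence between satisfaction in $\fohist$ and satisfaction in $\asStruc$ under the standard translation $\ST$. The only gap to bridge is the quantification over the distinguished starting world: the \EPP question asks for the \emph{existence} of a history $\ahist$ starting from $\aworld$ satisfying the goal, so I need to show that this existential statement over histories is faithfully internalized by the existential first-order quantifier $\exists \avarb$ in $\asStruc$, with membership in $\Histw$ captured by the predicate $\ofHistw$.

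First I would unfold the right-hand side. By the semantics of first-order logic, $\asStruc \models \exists \avarb (\ST(\goalformula) \land \ofHistw(\avarb))$ holds if and only if there is some element $d$ in the universe $\asDom$ with $\asStruc \models_{[\avarb \mapsto d]} \ST(\goalformula) \land \ofHistw(\avarb)$. The crucial observation is that the conjunct $\ofHistw(\avarb)$ forces the witness $d$ to lie in $\apredinterp[\fromw][\asStruc]$, which by the definition of $\asStruc$ is exactly the set $\Histw$ of histories starting from $\aworld$; in particular $d$ must be a history $\ahist \in \Hist$, not a domain element of some $\hDom$. So the witness is precisely a history starting from $\aworld$, matching the quantification in the \EPP statement.

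Next I would invoke Proposition~\ref{prop:reductionEPP-FO-auxiliary} to translate the remaining conjunct. Since $\goalformula$ is a closed formula (a goal formula has no free variables), its truth does not depend on the assignment of the first-order variables, so I may take any assignment $\val$ and the associated $\valh$; the proposition then gives $\mdls[\histstruct][\ahist][\val]{\goalformula}$ if and only if $\asStruc \models_{\valh[\avarb \mapsto \ahist]} \ST(\goalformula)$. Because $\ST(\goalformula)$ has $\avarb$ as its only free variable, the assignment of the other variables is immaterial, and I can align the witness assignment $[\avarb \mapsto d]=[\avarb \mapsto \ahist]$ with the one in the proposition. Combining the two equivalences, the existence of a history $\ahist \in \Histw$ with $\histstruct, \ahist \models \goalformula$ is equivalent to the existence of a universe element satisfying $\ST(\goalformula) \land \ofHistw(\avarb)$, which is the claim.

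I do not expect a genuine obstacle here, since the substance is carried by the auxiliary proposition; the only point requiring care is the handling of free variables. Specifically, I must be careful that $\ST$ introduces $\avarb$ as a free variable standing for the current history, so the ``closed'' goal formula $\goalformula$ translates to a formula with exactly one free variable $\avarb$, and that the $\ofDom$-relativization guards inside $\ST$ do not interfere with the top-level $\exists \avarb$ ranging over histories. Verifying that $\ofHistw(\avarb)$ restricts $\avarb$ to $\Hist$ (rather than to an element of some $\hDom$) is what guarantees the witness is a legitimate history, and this is immediate from the disjointness of the history component and the domain copies in $\asDom$.
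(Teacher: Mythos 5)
Your proof is correct and takes essentially the same route as the paper: the paper states \Cref{prop:reductionEPP-FO} as an immediate corollary of \Cref{prop:reductionEPP-FO-auxiliary} without spelling out details, and your argument is precisely that intended derivation, using $\apredinterp[\fromw][\asStruc]=\Histw$ to identify the existential witness with a history starting from $\aworld$ and then applying the auxiliary proposition. Your care about the free-variable bookkeeping (that $\ST(\goalformula)$ has only $\avarb$ free, so the choice of $\val$ and hence $\valh$ is immaterial) is exactly the right point to check and is handled correctly.
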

%% \begin{proof}
%%   \fbox{TODO}
%% \end{proof}

Notice that by, \Cref{prop:reductionEPP-FO} above and because we
proved that \EPPnm is undecidable, there must exist a structure
$\asStruc$ arising from action models with some non-quantifier-free
event post-condition that is not automatic.

We now have gathered all the material to prove our last result.

\subsection{\EPPnmqf is decidable}
%----------------------------------
\label{sec:EPP0}

We take inspiration from the methodology
of \cite{DBLP:conf/aiml/Doueneau-TabotP18,DELplan}: In order to prove
that \EPPnmqf is decidable, it is sufficient to show that the resulting 
structure $\asStruc$ is automatic (\Cref{theo:AutStrucH}). This is
because, by \Cref{prop:reductionEPP-FO}, the epistemic planning problem then
reduces to the decidable model-checking against first-order formula
over the automatic structure $\asStruc$.

The section is therefore dedicated to the proof of the
following \Cref{theo:AutStrucH}, and ends with the
statement that \EPPnmqf is decidable (\Cref{theo:EPPzero}) as a corollary.

%% In this section we show that for the restricted class of inputs allowed
%% in \EPPnmqf, the first-order structure $\asStruc$ is automatic
%% (Theorem~\ref{theo:AutStrucH}).

\begin{theorem}
  \label{theo:AutStrucH} Let $\model$ be an automatic epistemic model
  and $\event$ be an action model where all pre- and
  post-conditions are non-modal, and post-conditions are
  quantifier-free.

Then the derived first-order structure $\asStruc$  of the epistemic model of
histories (see \Cref{def:FOH}) is automatic.
\end{theorem}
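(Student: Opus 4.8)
The plan is to build an explicit automatic presentation of $\asStruc$, the decisive leverage being that quantifier-free post-conditions can generate only finitely many predicate interpretations \emph{up to first-order definability}, which turns the a priori infinite family of structures $(\ainter[\ahist])_{\ahist \in \Hist}$ into a finite-state device driving the evolution along histories. Concretely, I would first establish the following finiteness claim: for every predicate $\apred(\arity) \in \signature$ and every history $\ahist$ starting from $\aworld$, the interpretation $\apredinterp[\apred][\ahist]$ is definable in the \emph{starting} structure $\ainter[\aworld]$ by a quantifier-free $\signature$-formula with free variables among $x_1,\dots,x_\arity$. This is an induction on the length of $\ahist$: at the starting world $\apred$ is defined by the atom $\apred(x_1,\dots,x_\arity)$; a single update by event $\aevent$ replaces, in the quantifier-free post-condition $\post(\aevent)(\apred)$, each predicate occurrence by the quantifier-free formula currently defining that predicate (with the matching renaming of variables), and the composition stays quantifier-free because no quantifier is ever introduced. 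Since $\signature$ is finite and the number of free variables is bounded by the largest arity in $\signature$, there are finitely many available atoms, hence finitely many quantifier-free formulas up to propositional equivalence. Thus the map sending $\ahist$ to the tuple of equivalence classes of the formulas defining all predicates takes finitely many values — the \emph{states} — and the successor state is a function of the current state and the triggered event via the substitution above.

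Next, for a fixed starting world the structure $\ainter[\aworld]$ is fixed, so the state of $\ahist$ determines the entire current structure $\ainter[\ahist]$, and therefore the truth of each closed non-modal precondition $\pre(\aevent)$ at $\ahist$. This truth value is decidable: it is the first-order query obtained by substituting, in $\pre(\aevent)$, every predicate by its current defining formula and evaluating in the automatic structure $\ainter[\aworld]$, which is decidable by \Cref{theo:FOinAS}. Hence the enabledness of each event at each state is computable, the finite reachable-state transition system is effectively constructible, and the set $\Histw$ of valid histories starting from $\aworld$ is a regular language over $\Events$; the union over the finitely many worlds of $\model$ makes $\Hist$ regular.

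It then remains to assemble the automatic presentation. I would encode an element of $\asDom$ by pairing a word over $\Worlds \sunion \Events$ for the history it is tagged by (accepted by the automaton above) with, for an element of $\hDom$, the word encoding the corresponding element of $\Dom$ in the automatic presentation of $\ainter[\aworld]$; by the quantifier-free argument this presentation keeps a fixed domain automaton along all histories sharing the starting world, so one element-encoding works per world. With this encoding every relation of $\signAS$ is regular: $\fromw$ inspects the first letter; $\ofDom$ and equality of histories are letterwise comparisons under convolution; $\epsign$ is the synchronized letterwise product of $\aeprel$ and $\edgeformula$ (two histories are $\agent$-related iff they have equal length, $\aeprel$-related starting worlds, and $\edgeformula$-related events componentwise); and for the reshaped predicate $\predsign$ the automaton first reads the history component to compute its state $s$, then runs on the element components the automaton recognizing the set defined by $\varphi_{\apred,s}$ in $\ainter[\aworld]$ — there being finitely many states, hence finitely many such automata, a product construction suffices.

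The main obstacle is the finiteness claim of the first paragraph: it is what collapses the infinitely many structures $(\ainter[\ahist])_{\ahist}$ to a finite abstraction, and it is precisely the ingredient missing in the quantified case (\Cref{theo:EPPun}), where transitive-closure approximations keep producing unboundedly many inequivalent definitions. Everything else is a routine, if careful, automata construction, the only delicate bookkeeping being the synchronization in the convolution between the history-reading phase and the element-reading phase of the $\predsign$-automaton. Given \Cref{theo:AutStrucH}, decidability of \EPPnmqf then follows from \Cref{prop:reductionEPP-FO} together with \Cref{theo:FOinAS}.
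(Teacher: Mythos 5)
Your proposal is correct and follows essentially the same route as the paper: your key finiteness claim (every predicate interpretation along a history is quantifier-free definable over the starting structure, because substitution into quantifier-free post-conditions never introduces quantifiers, and there are only finitely many such formulas up to equivalence) is exactly the content of Lemmas~\ref{lem:IH} and~\ref{lem:inBool} phrased syntactically rather than via finitely generated Boolean algebras, and your finite transition system over interpretation-states is precisely the automaton of Proposition~\ref{prop:classisregular}. The only divergence is a harmless (arguably cleaner) detail of the final assembly: the paper equips each of the finitely many interpretation classes with its own automatic encoding obtained by iterating \Cref{prop:automaticupdates}, whereas you keep a single fixed encoding per starting world and use the fact that quantifier-free-definable relations in an automatic presentation are regular with respect to that same presentation.
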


%\begin{comment}
As pre- and post-conditions are non-modal, the update by an event
$\aevent$ of the predicate interpretations at some history
$\ahist \in \Hist$ only depends on $\ainter[\ahist]$ and $\aevent$. We
can thus keep track of the interpretation along $\ahistw$ after the
trigger of each event $\aevent_{\aintb}$ by remembering the current
interpretation.

We show that, in the case where all post-conditions in $\event$ are
quantifer-free, there are only finitely many different interpretations
$\ainter[\ahist]$. Otherwise said, the natural equivalence relation
$\intereq$ between histories induced ``having isomorphic
interpretation'' (\Cref{def:intereq}) yields a finite partition of
$\Hist$ (Proposition~\ref{prop:quotfinite}).

% We then define an encoding function $\encfunc$ for the entire domain
% $\asDom = \asDom*$ of the structure $\asStruc$ and show that it yields
% regular languages for $\encfunc(\asDom)$
% (Lemma~\ref{lem:Domisregular}) but also for $\encfunc(\aseprel)$,
% $\encfunc(\aspredinter)$, $\encfunc(\fromw)$ and $\encfunc(\ofDom)$
% (Lemma~\ref{lem:allareregular})

%We introduce the relation $\intereq$ that gathers the histories with the same predicate interpretation.

\begin{definition}
\label{def:intereq}
  Define the relation $\intereq \subseteq \Hist \times \Hist$ by: for all histories $\ahist\text{,}\ahistb \in \Hist$,
  $$\ahist \ \intereq \ \ahistb \text{ iff
  } \apredinterp[\apred][\ahist]
  = \apredinterp[\apred][\ahistb] \text{, for every
  } \apred \in \Pred.$$ We will denote by $\intereqcl = \{ \ahistb
  ~|~ \ahist \ \intereq \ \ahistb \}$ the $\intereq$-equivalence class
  of history $\ahist$ and by $\quotcl$ the set of all the
  equivalence classes, with typical element $\aclass$.
\end{definition}

By \Cref{def:intereq} of $\intereq$, it is clear that
  $\ainter[\ahistb] = \ainter[\ahist]$\footnote{up to the isomorphism
  arising from the natural one-to-one mapping between $\Dom[\ahist]$
  and $\Dom[\ahistb]$.}, for every $\ahistb \in \intereqcl$. This
  allows us to consistently set for a $\intereq$-class $\aclass \in \quotcl$,
$\intereqcl[\aclass] := \ainter[\ahist]$, for some $\ahist \in \aclass$, and to define $\ainter[\aclass] := \ainter[\ahist]$.
%  \ie $\apredinterp[\apred][\intereqcl]
%  := \apredinterp[\apred][\ahist]$, for each $\apred \in \Pred$.\\
%
%As a consequence, a $\intereq$-class of histories $\intereqcl$ is fully
%determined by the first-order structure $\ainter[\ahist]$.\\

%\begin{definition}
We introduce some technical material that will be usefull in the proof that the set $\intereqcl$ is regular (\Cref{prop:classisregular}).
  Let $\ainter[]$ be a $\Pred$-interpretation with domain $\Dom$, and
  $\aevent \in \Events$, if $\ainter[] \models \pre*$, we define
  the $\Pred$-interpretation
  $$\ainter[] \otimes \aevent := (\Dom, (\apredinterup)_{\apred \in \Pred})$$
  where $\apredinterup$ is the set of $\tuple{\aelem}[\arity]$ such that:
  $$\ainter[] \mdlsz \post*\tuple{\avar}[\arity] \}.$$
%\end{definition}

\begin{lemma}
  \label{lem:concatop}
  For every $\ahist \aevent \in \Hist$, we have:
  $$\intereqcl[\ahist \aevent] = \intereqcl[\ahist] \otimes \aevent.$$
\end{lemma}
\begin{proof}
  As $\post*$ is non-modal,
  \begin{center}
   $\ainter[\ahist\aevent] \mdlsz \apred\tuple{\avar}[\arity]$\\
  if, and only if,\\
  $\fohist, \ahist \aevent \mdlsz \apred\tuple{\avar}[\arity]$\\
  if, and only if,\\
  $\fohist, \ahist \mdlsz \post*\tuple{\avar}[\arity]$\\
  if, and only if,\\
  $\ainter[\ahist] \mdlsz \post*\tuple{\avar}[\arity]$\\
  if, and only if,\\
  $\ainter[\ahist] \mdlsz \apredinterup\tuple{\avar}[\arity]$
  \end{center}
\end{proof}

%% Finiteness of $\quotcl$ is a
%% milestone in showing the automaticity of $\asHist$, as it allows us to
%% rely on finitely many different interpretations along all the possible
%% histories.

Recall that in this section, $\model$ is automatic and $\event$ is an
  action model where all pre- and post-conditions are non-modal, and
  post-conditions are quantifier-free.

We use this assumption to
  establish the following, essentially stating that the set
  $\ainterfam[\ahist][\Hist]$ is a finite family.

\begin{proposition}
\label{prop:quotfinite}
  $\quotcl$ is finite.
\end{proposition}

In order to establish \Cref{prop:quotfinite}, it is sufficient to show that every intepretations $\apredinterp[\apred][\ahist]$
belongs to the Boolean algebra finitely generated by the
interpretations $\apredinterp[\apredb]$, where $\apredb \in \Pred$ and
$\aworld \in \Worlds$. Note that since this algebra of sets is
finitely generated, it is itself finite.

We first introduce some notations.
\begin{definition} \label{def:atom} 
\begin{itemize}
\item For $\aint \in \Int$, let $\intset := \{ 1,...,\aint \}$.

\item For $\arityb, \arity \in \Int$, $\aset \subseteq {\Dom}^{\arityb}$, and 
    $\mix: \intfunc[\arityb][\arity]$, we let:
$$\aset \mix
    := \{ \tuple{\aelem}[\arity] ~|~
    (\aelem_{\mix(1)},...,\aelem_{\mix(\arityb)}) \in \aset \}$$

\item For $G=\{\aset[1],\ldots,\aset[m]\}$ comprised of subsets of a given superset, denote by
    $\bool G$ the boolean algebra generated by $G$.
    \end{itemize}
        \end{definition}
We now show that the interpretation of a
predicate after an update is a combination of atoms from the precedent
interpretations.
  \begin{lemma}
    \label{lem:IH}
Let $\ahist \aevent \in \Hist$,  and $\apred(\arity) \in \Pred$.

If $\post* \in \nmpLang$, then
    $$\apredinterp[\apred][\ahist \aevent] \in \bool \{ \apredinterp[\apredb][\ahist] \mix ~|~ \apredb(\arityb) \in \Pred, ~\sigma : \intfunc \}.$$
  \end{lemma}
  \begin{proof} We reason by induction over formula $\post* \in \nmpLang$.
    If $\post*$ is an atomic formula of $\nmpLang$, say of the form $\apredb(\avar_{\mix(1)},...,\avar_{\mix(\arityb)})$, we have: \\
    $\begin{array}{l}
      \tuple{\aelem}[\arity] \in \apredinterp[\apred][\ahist \aevent]\\
      \text{iff}~ \ainter[\ahist] \mdlsz[\arity] \apredb(\avar_{\mix(1)},...,\avar_{\mix(\arityb)}) \\
      \text{iff}~ (\aelem_{\mix(1)},...,\aelem_{\mix(\arityb)}) \in \apredinterp[\apredb][\ahist] \\
      \text{iff}~ \tuple{\aelem}[\arity] \in \apredinterp[\apredb][\ahist] \mix.
      \end{array}
    $ \\
  Otherwise $\post*$ is Boolean combination of atomic formulas of $\nmpLang$, and as a result:
  $$\apredinterp[\apred][\ahist \aevent] \in \bool \{ \apredinterp[\apredb][\ahist] \mix ~|~ \apredb(\arityb) \in \Pred, ~\sigma : \intfunc \}$$
  \end{proof}

We now establish that the interpretation of each predicate
  $\apredinterp[\apred][\ahist]$ for a history $\ahist$ starting from
  $\aworld$ belongs to the Boolean algebra generated by the initial
  interpretations $\apredinterp[\apredb]$ ($\apredb \in \signature$) of
  the predicates in $\ainter$.

\begin{lemma} \label{lem:inBool} For every $\ahist \in \Hist$, for every
  $\apred \in \Pred$,
    $$\apredinterp[\apred][\ahist] \in \bool \{ \apredinterp[\apredb] \mix ~|~ \apredb(\arityb) \in \Pred, ~\sigma : \intfunc \}.$$
  \end{lemma}
  \begin{proof}
First, observe that
  %% \begin{lemma}
  %%   \label{lem:compo}
given $\arity, \arityb, \arityc \in \Int$, $\mix : \intfunc[\arityb][\arity]$, $\mixb : \intfunc[\arityc][\arityb]$, and $\aset \subseteq {\Dom}^{\arityc}$, we have:
    \begin{equation}
    \label{eq:compo}
    (\aset \mixb)\mix = \aset (\mix \compo \mixb)
    \end{equation}
%%  \end{lemma}
%%  \begin{proof}
Indeed, $\tuple{\aelem}[\arity] \in (\aset \mixb) \mix$ iff $(\aelem_{\mix(1)},...,\aelem_{\mix(\arityb)}) \in \aset \mixb$ iff $(\aelem_{\mix(\mixb(1))},...,\aelem_{\mix(\mixb(\arityc))}) \in \aset $ iff $\tuple{\aelem}[\arity] \in \aset (\mix \compo \mixb)$, which achieves the argument. \\
%%  \end{proof}

We prove \Cref{lem:inBool} by induction on $\ahist$.

If $\ahist=\aworld$, then for every $\apred(\arity) \in \Pred$, we clearly have $\apredinterp
    \in \bool \{ \apredinterp[\apredb] \mix ~|~ \apredb(\arityb) \in \Pred, ~\sigma : \intfunc \}$.

Otherwise, for a history of the form  $\ahist \aevent$, by \Cref{lem:IH}, we have that
    $\apredinterp[\apred][\ahist \aevent] \in \bool \{ \apredinterp[\apredb][\ahist] \mix
    ~|~ \\ \apredb(\arityb) \in \Pred, ~\sigma : \intfunc \}$. By induction hypothesis,
    $\apredinterp[\apredb][\ahist] \in \bool \{ \apredinterp[\apredc] \mixb
    ~|~ \apredc(\arityc) \in \Pred, ~\mixb
    : \intset[{\arityc}] \to \intset[{\arityb}] \}$.
    Thus $\apredinterp[\apredb][\ahist] \mix \in \bool \{
    (\apredinterp[\apredc] \mixb) \mix ~|~ \apredc \in \Pred, ~\mixb
    : \intset[{\arityc}] \to \intset[{\arityb}] \}$, Therefore by \Cref{eq:compo},
    $$\apredinterp[\apredb][\ahist] \mix \in \bool \{ \apredinterp[\apredc](\mix \compo \mixb) ~|~ \apredc(\arityc) \in \Pred, ~\mixb
    : \intset[{\arityc}] \to \intset[{\arityb}] \},$$ \ie
    $\apredinterp[\apredb][\ahist] \mix \in \bool \{ \apredinterp[\apredc]\mixb
    ~|~ \apredc(\arityc) \in \Pred, ~\mixb
    : \intset[{\arityc}] \to \intset[{\arity}] \}$, which entails
    $$\apredinterp[\apred][\ahist \aevent] \in \bool \{ \apredinterp[\apredb]\mix
    ~|~ \apredb(\arityb) \in \Pred, ~\mix
    : \intset[{\arityb}] \to \intset[{\arity}] \}.$$ \end{proof}

Since $\{ \apredinterp[\apredb]\mix ~|~ \apredb[\arityb] \in \Pred, ~\mix : \intset[{\arityb}] \to \intset[{\arity}] \}$ is finite, so is $\bool \{ \apredinterp[\apredb]\mix ~|~ \apredb(\arityb) \in \Pred, ~\mix : \intset[{\arityb}] \to \intset[{\arity}] \}$. Therefore there is a finite number of different $\apredinterp[\apred][\ahist]$, which achieves the proof of \Cref{prop:quotfinite}.

This finiteness property paves the way to defining an automaton that
while reading a history $\ahist$ can determine $\ainter[\ahist]$
(Proposition~\ref{prop:classisregular}).  %% Still, knowing that each
%% interpretation $\ainter[\ahist]$, or equivalently $\ainter[\ahist]$ is
%% automatic (Proposition~\ref{prop:automaticupdates}) does not provide
%% us with the mechanism to know which interpretation to consider after
%% history $\ahist$. The following proposition is an answer.

\begin{proposition}
  \label{prop:classisregular}
  For each $\ahist \in \Hist$, $\intereqcl$ is regular.
\end{proposition}

\begin{proof}
%Before proving Proposition~\ref{prop:classisregular}, we introduce some material.
%As each class $\intereqcl$ is associated to an interpretation, we can write $\intereqcl \otimes \aevent$ for $\aintercl \otimes \aevent$.
We establish the regularity of $\intereqcl$ by constructing a
finite-state automaton $\automata[\intereqcl]$ for it.

This automaton
(over alphabet $\Worlds \sunion \Events$) has states ranging over
$\{ \initstate \} \sunion \quotcl$ (where $\initstate$ is a fresh
state), initial state $\{ \initstate \}$, and final states ranging over
$\{ \intereqcl \}$, and the transition relation $\delta$
defined by:
\begin{itemize}
\item Regarding $\initstate$:
\begin{itemize}
\item $\transition(\initstate, \aworld) = \intereqcl[\aworld]$, for every $\aworld \in \Worlds$;
\item $\transition(\initstate, \aevent) = \emptyset$, for $\aevent \in \Events$;
\end{itemize}
\item Regarding every $\aclass \in \quotcl$:
\begin{itemize}
\item  $\transition(\aclass, \aworld) = \emptyset$, for every $\aworld \in \Worlds$;
\item  $\transition(\aclass, \aevent) =
\left\{
\begin{array}{ll}
\aclass \otimes \aevent & \text{if }\ainter[\aclass] \models \pre*,\\
\emptyset & \text{otherwise.}
\end{array}
\right.$
\end{itemize}
\end{itemize}
%% We define automaton $\automata*[\intereqcl][\Worlds \sunion \Events]$ over finite alphabet with: \fbox{paragraph}

%%   \begin{itemize}

%%     \item $\states = \{ \initstate \} \sunion \quotcl$
%%     \item For $\aworld \in \Worlds$, $\transition(\initstate, \aworld) = \intereqcl[\aworld]$ \\
%%     For $\aclass \in \quotcl$ and $\aevent \in \Events$, if $\ainter[\aclass] \models \pre*$, then $\transition(\aclass, \aevent) = \ainter[\aclass] \otimes \aevent$
%%     \item $\initial = \{ \initstate \}$
%%     \item $\final = \{ \intereqcl \}$
%%   \end{itemize}
\noindent We now show that $\intereqcl=\lang(\automata[\intereqcl])$:\\
%% =======
%%   As $\post*$ is non-modal, we have that: \\
%%   $$\Hist, \ahist \mdlsz \post*\tuple{\avarc}[\arity{\apred}] ~\text{iff}~ \ainter[\ahist] \mdlsz \post*\tuple{\avarc}[\arity{\apred}]$$
%%   Therefore $\apred^{\ainter[\ahist] \otimes \aevent} = \apredinterp[\apred][\ahist \aevent]$ \ie $\ainter[\ahist \aevent] = \aintercl[\ahist] \otimes \aevent$.
%% \end{proof}

%% %As each class $\intereqcl$ is associated to an interpretation, we can write $\intereqcl \otimes \aevent$ for $\aintercl \otimes \aevent$.
%% \begin{lemma}
%%   \label{lem:classisregular}
%%   For each $\ahist \in \Hist$, $\intereqcl$ is regular.
%% \end{lemma}
%% \begin{proof}
%%   Define $\automata[\intereqcl]$ over finite alphabet $\Worlds \sunion \Events$, with a finite set of states $\{ \initstate \} \sunion \{ \ainter[\aclass] ~|~ \aclass \in \quotcl \}$, with an initial state $\initstate$, a final state $\ainter[\intereqcl]$ and a transition function $\transition$ defined by:
%%   \begin{itemize}
%%     \item For $\aworld \in \Worlds$, $\transition(\initstate, \aworld) = \ainter[\intereqcl[\aworld]]$
%%     \item For $\aclass \in \quotcl$ and $\aevent \in \Events$, if $\ainter[\aclass] \models \pre*$, then $\transition(\ainter[\aclass], \aevent) = \ainter[\aclass] \otimes \aevent$
%%   \end{itemize}
%%   We argue that $\lang(\automata[\intereqcl]) = \intereqcl$: \\
%% >>>>>>> f03c06f9536286bb569a1059aa0c3b2d711c41ed
$\ahistw \in \intereqcl$

$\begin{array}{lll}

    \text{iff}~ &\intereqcl[\ahistw]= \intereqcl \\
    \text{iff}~ &\intereqcl[\ahistw]\text{ is final} \\
    \text{iff}~ &\intereqcl[\aworld] \otimes \aevent_1 \otimes ... \otimes \aevent_{\aint} ~\text{is final} & \text{(\Cref{lem:concatop})}\\
    \text{iff}~ &\transition^*(\initstate,\ahistw) ~\text{is final} \\
    \text{iff}~ &\ahistw \in \lang(\automata[\intereqcl])
    \end{array}$
\end{proof}

We have here gathered all the material to show that the structure $\asStruc$
%%$$\asStruc*[\asDom][\asepfam][\aspredinterfam][(\fromwinter)_{\aworld \in \Worlds}][\ofDominter].$$
is automatic.

We first start with the encoding $\encfunc$ (see
(\Cref{def:encoding}) of the elements of the universe $\asDom=\asDom*$, then we prove that encoding $\encfunc$ provides an
automatic presentation of the first-order structure $\asStruc$ (\Cref{lem:allareregular}).

Recall that because event updates are particular cases of first-order
interpretations, and, according to
\Cref{prop:automaticupdates}, we know that each $\ainter[\ahist]$ is
automatic. As a consequence, each $\ainter[\intereqcl]$ is automatic,
say with encoding $\encfunccl$ over some
alphabet $\alphabet_{[\ahist]}$.

%% Therefore,
%% there exists an automatic presentation for $\ainter[\intereqcl]$, over some
%% alphabet $\alphabet_{[\ahist]}$, and we denote by $\encfunccl$ the
%% encoding function of this automatic presentation which maps every
%% element of $\Dom[{\ainter[\intereqcl]}]$ (that is $\Dom$) onto a finite word of
%% $\alphabet_{[\ahist]}^*$.

Now, the overall encoding function $\encfunc$ of the universe $\asDom$ is as follows.

\begin{definition}
\label{def:encoding}
 The encoding function $\encfunc$ of the universe $\asDom=\asDom*$
$$\encfunc : \asDom \to (\Worlds \sunion \Events \sunion \sUnion_{[\ahist] \in \quotcl} \alphabet_{[\ahist]})^*$$
  is defined by:
  \begin{itemize}
    \item for $\ahistw[m] \in \Hist$, $\encfunc(\ahistw[m]) := \ahistw[m]$
    \item for $\aelem \in \hDom$, $\encfunc(\aelem) := \ahist \cdot \encfunccl(\decopyh(\aelem))$; we recall that $\copyh$ is the bijection between $\Dom$ and $\hDom$.
  \end{itemize}
\end{definition}

 We recall that
$$\asStruc*[\asDom][\asepfam][\aspredinterfam][(\fromwinter)_{\aworld \in \Worlds}][\ofDominter]$$
where
\begin{itemize}
\item $\asDom = \asDom*$;
\item$\aseprel = \aeprel$, for $\agent \in \agt$;
\item $\aspredinter = \aspredinter*$, for $\apred \in \Pred$;
\item $\fromwinter = \Histw$, for $\aworld \in \Worlds$; and
\item $\ofDominter=\ofDominter*$.
\end{itemize}

\begin{lemma}
  \label{lem:Domisregular}
  $\encfunc$ is injective and $\encfunc(\asDom)$ is regular.
\end{lemma}
\begin{proof}
Mapping $\encfunc$ is injective since any element of
$\asStruc$ is the element of domain $\hDom$, for some $\ahist$, and
is therefore univoquely identified by the prefix $\ahist$ in its
encoding.

Regarding the regularity of $\encfunc(\asDom)$, we have:

$\begin{aligned} \encfunc(\asDom)
  &= \encfunc(\asDom*)
  = \encfunc(\Hist) \sunion \sdUnion_{\ahist \in \Hist} \encfunc(\hDom) \\
  &= \Hist \sunion \sdUnion_{\ahist \in \Hist} \ahist \cdot \encfunccl(\decopyh(\hDom))\\
  & (\text{by definition of } \encfunc)\\
  &= \Hist \sunion \sdUnion_{\ahist \in \Hist} \ahist \cdot \encfunccl(\Dom) \\
  &= \sUnion_{\aclass \in \quotcl} \aclass \sunion \sdUnion_{\aclass \in \quotcl} \aclass \cdot \encfunch[\aclass](\Dom)\\
  & (\text{because } \Hist
  = \sUnion_{\aclass \in \quotcl} \aclass) \\ \end{aligned}$ \\ In the
  expression above, $\aclass$
  is a regular language by \Cref{prop:classisregular}, and so is $\encfunch[\aclass](\Dom)$ since $\ainter[\alpha]$ is automatic. Moreover, by
  Proposition~\ref{prop:quotfinite}, the unions are finitely many,
  entailing that $\encfunc(\asDom)$ is a regular language.
\end{proof}

It now remains to prove that $\encfunc$ is a good candidates for the
relations of the structure $\asStruc$.

\begin{lemma}
  \label{lem:allareregular}
Relations $\encfunc(\aseprel)$, $\encfunc(\aspredinter)$, $\encfunc(\fromw)$ and
 $\encfunc(\ofDom)$ are regular.
\end{lemma}
\begin{proof}
%Due to lack of space, we only prove the case of $\encfunc(\epinter)$ and $\encfunc(\aspredinter)$, and we refer to Appendix~\ref{app:lem:allareregular} for the remaining cases.
By the definition of $\asStruc$ from \Cref{sec:fohist-fostruct}, we have:
\begin{itemize}
\item $\encfunc(\epinter) = \encfunc(\epinter*) = \epinter* = (\Hist \times \Hist) \sinter \aeprel[\agent][0] \cdot \edgeformula^*$, where we recall that $\aeprel[\agent][0]$ is the epistemic relation in $\model$, and $\edgeformula^*$ is the set of pointwise concatenation of pairs in the binary relation $\edgeformula$ in $\event$.

Now, because $\Worlds$ is finite, $\aeprel[\agent][0] \subseteq \Worlds \times \Worlds$ is regular. Also, since $\Events$ is finite, $\edgeformula \subseteq \Events \times \Events$ is regular and so is $\edgeformula^*$. Obviously $\Hist \times \Hist$ is regular. Since regular languages are closed under intersection, $\encfunc(\epinter)$ is a regular language.
\item If $\apred(\arity) \in \Pred$,
$\encfunc(\aspredinter) = \encfunc(\aspredinter*) = \encfunc(\sUnion_{\aclass \in \quotcl} \aclass \times \apredinterp[\apred][\aclass]) = \sUnion_{\aclass \in \quotcl} \encfunc(\aclass) \times \encfunc(\apredinterp[\apred][\aclass])$. Now because by definition of $\encfunc$,
$$\encfunc(\apredinterp[\apred][\aclass])=(\prod_{\aintb = 1}^{\arity} \encfunc(\aclass)) \cdot \encfunch[\aclass](\apredinterp[\apred][\aclass]),$$ this set is also equal to $\sUnion_{\aclass \in \quotcl} \encfunc(\aclass) \times ((\prod_{\aintb = 1}^{\arity} \aclass) \cdot \encfunch[\aclass](\apredinterp[\apred][\aclass]))$.
 Since regular languages are closed under cartesian product and
  union, this last expression describes a regular language.
\item  $\encfunc(\fromw) = \encfunc(\Histw) = \Histw = \Hist \sinter \aworld \cdot \Events^*$. which clearly is a regular language.
\item Finally, $\encfunc(\ofDominter) = \encfunc(\ofDominter*) = \sUnion_{\aclass \in \quotcl} \aclass \times \encfunch[\aclass](\Dom)$. As a finite union of regular languages, $\encfunc(\ofDominter)$ is a regular language.
\end{itemize}
\end{proof}

Lemmas~\ref{lem:Domisregular} and~\ref{lem:allareregular} conclude the proof of \Cref{theo:AutStrucH}, and as a consequence, we state the following.
%\end{comment}

As a consequence, we state the following.

\begin{theorem}
\label{theo:EPPzero}
\EPPnmqf is decidable.
\end{theorem}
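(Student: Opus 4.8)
The plan is to establish decidability of \EPPnmqf as an immediate corollary of the two results already proved in this section. By \Cref{prop:reductionEPP-FO}, the epistemic planning problem for an input $(\model, \aworld, \event, \goalformula)$ reduces to deciding whether the single first-order sentence $\exists \avarb (\ST(\goalformula) \land \ofHistw(\avarb))$ holds in the derived structure $\asStruc$. So the entire burden of the decidability proof lies in being able to effectively answer first-order model-checking queries against $\asStruc$.

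The key observation is that the hypotheses of \EPPnmqf are exactly the hypotheses of \Cref{theo:AutStrucH}: the input epistemic model $\model$ is automatic, and the action model $\event$ has non-modal pre- and post-conditions with the post-conditions moreover quantifier-free. Under these assumptions, \Cref{theo:AutStrucH} guarantees that $\asStruc$ is automatic. First I would invoke \Cref{theo:AutStrucH} to obtain an automatic presentation of $\asStruc$; the only subtlety to mention is that this presentation is computable from the input, which follows from the constructions in Lemmas~\ref{lem:Domisregular} and~\ref{lem:allareregular} (the finitely many interpretation classes of \Cref{prop:quotfinite}, the automaton of \Cref{prop:classisregular}, and the automaticity of each $\ainter[\intereqcl]$ from \Cref{prop:automaticupdates} are all effectively obtainable).

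Then I would appeal to \Cref{theo:FOinAS}, which states that the first-order theory of an automatic structure is decidable: given the automatic presentation of $\asStruc$, one can decide whether any fixed first-order sentence holds. Applying this to the sentence $\exists \avarb (\ST(\goalformula) \land \ofHistw(\avarb))$ produced by \Cref{prop:reductionEPP-FO} yields a decision procedure for the original planning query, and hence \EPPnmqf is decidable.

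There is no substantial obstacle remaining at this point, since all the hard work has already been carried out in \Cref{theo:AutStrucH} and the reduction \Cref{prop:reductionEPP-FO}. The only point warranting a line of care is effectivity: \Cref{theo:FOinAS} as stated asserts decidability of the theory of a \emph{given} automatic structure, so I would make explicit that the automatic presentation of $\asStruc$ is not merely existent but can be computed from the planning instance, so that the two results compose into a genuine algorithm rather than a mere non-uniform decidability statement.
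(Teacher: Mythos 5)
Your proposal is correct and follows essentially the same route as the paper: there, \Cref{theo:EPPzero} is likewise obtained as a direct corollary of \Cref{theo:AutStrucH} (automaticity of $\asStruc$) combined with the reduction of \Cref{prop:reductionEPP-FO} and the decidable first-order model checking over automatic structures (\Cref{theo:FOinAS}). Your closing remark on the effectivity of the automatic presentation is a point of care the paper leaves implicit, but it does not change the argument.
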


\begin{example}
\label{ex:EPPextend}
Because an event whose effect is the operation of self-concatenation
-- that we did not consider in \Cref{ex:event} -- would require some
existential quantifier in its post-condition, the query whether a
given language $\wordlang$ can be obtained by union, complement and
concatenation from a finite set of generators, or not,
(see \Cref{ex:EPP}) may not be decidable because
of \Cref{theo:EPPun}.

On the contrary, fixing a bound $k$ to the number of occurences of the
concatenation operation in the expression for $\wordlang$, allows us
to reduce the problem to the decidable epistemic planning
problem \EPPnmqf: we can control by quantifier-free post-conditions
and enough unary predicates $\predlangcompute_0,\ldots,\predlangcompute_k$
in the signature that the concatenation operator is applied at most $k$ times.
%% a bounded number of timeswhere the
%% concatenation is used more than $k$ times, we ensure that there is
%% still a finite number of different interpretations
%% (\ie \Cref{prop:quotfinite} remains true). Therefore the particular
%% case of our problem with a bounded use of concatenation is
%% decidable. \\
\end{example}

Remark that our running example on formal languages does not fully
exploit the epistemic setting, as all events are distinguishable. To
make our formal language problem more exciting on this aspect, one can expand it by
introducing ``errors" in the action model: the agent may not know
whether the applied event is, say, $\eventunioni$ (the union with
languages $L_i$) or some other event corresponding to the union with
an other language $\wordlang[i]'$, say because of
computation errors during the process. We then would be able to decide
whether or not the agent can still decide if the language $\wordlang$
is obtainable through operations on $\wordlang[0]$,...,$\wordlang[n]$.

Still, we believe that our running example on formal languages is an
interesting one for being reminiscent of the many challenging problems
in formal language theory as addressed
in \cite{brzozowski1980open}. In particular, the case of language
concatenation, as considered in \Cref{ex:EPPextend}, is tightly connected with the
\emph{dot-depth hierarchy} of \cite{cohen1971dot}, where many conjectures
remain \cite{pin2017dot}.

\section{Conclusion and Future work}
%======================
\label{sec:Conclusion}

We introduced the problem Epistemic Planning Problem \EPP as a natural
problem steming from the \DFOEL setting and that allows for
infinite-domain first-order structures. Our framework thus strictly
extends the one proposed in \cite{FOEL}.

Additionally, in order to have finitely presentable inputs to \EPP, we
proposed to consider so-called \emph{automatic} epistemic models.

Regarding the problem \EPP, and inheriting from known results on
epistemic planning in the \DEL setting (no first-order logic), the
former problem, that allows for modal action models, is de facto
undecidable. However, as shown here, restricting to non-modal
action models is not sufficient in the wider first-order setting,
since the resulting subproblem \EPPnm remains undecidable (\Cref{theo:EPPun}).

\Cref{theo:EPPun} and its companion \Cref{theo:EPPzero} for the
decidability of \EPPnmqf make evidence that the nature of
post-conditions, whether quantifier-free of not, draws a line in the
decidability landscape. To our knowledge this phenomenon has never
been observed in the literature, likely because infinite first-order
domains have too eagerly been discarded.  And indeed, dealing with
infinite domain requires powerful mechanisms: our proof for
that \EPPnmqf is decidable (\Cref{theo:EPPzero}) involves a
non-trivial machinery, heavily based on automatic structures.

Even though one might wish a simpler proof for the \EPPnmqf
decidability, the chosen approach has some benefits by offering
automata constructions that entail a wide range of decidable problems
related to epistemic planning (in the same line
as \cite{DELplan}). For instance, with the proof of automaticity, we
get an automaton that represents the set of all plan solutions which
enables one to address all sorts of queries, e.g., whether they are
infinitely many, whether there exists a solution that never resorts to
some particular event, etc.

On a longer term perspective, we may investigate event schemes in
action models, as done in
\cite{FOEL}, that, for infinite first-order domains, would result in infinitely many
events, hence a powerful but challenging setting. Under some hypothesis of automaticity, we
believe that our decidability proof for \EPPnmqf can be extended.

Finally, another interesting track of research would be to allow
different domains in different words, as first-order interpretations do.
%\vfill
%\newpage
%\input{appendix}

%% The file kr.bst is a bibliography style file for BibTeX 0.99c
%\bibliographystyle{kr}
\bibliographystyle{alpha}
\bibliography{biblio}

\end{document}